\documentclass[11pt]{article}

\usepackage[margin=1in,headheight=14pt,footskip=30pt]{geometry}
\usepackage{amsmath,amssymb,amsthm}
\usepackage{array}
\usepackage{microtype}
\usepackage[hidelinks]{hyperref}
\hypersetup{
  pdftitle={Step Recursion: Exact Depth Does Not Determine Algebraic Expressiveness},
  pdfauthor={Kirill Osipov},
  pdfsubject={Exact-depth separation and algebraic multiplicity for bounded step recursion},
  pdfkeywords={step recursion, bounded recursion, generalized inverse, depth equivalence, function algebra, Grzegorczyk hierarchy}
}

\newcommand{\Nat}{\mathbb N}
\newcommand{\Alg}[2]{\mathcal A_{#1}(#2)}
\newcommand{\Halv}[1]{\mathcal B_{0,#1}}
\newcommand{\SR}{\operatorname{SR}}
\newcommand{\blen}{\lambda}

\newcommand{\DepthEq}{\equiv_D}

\newtheorem{theorem}{Theorem}
\newtheorem{corollary}[theorem]{Corollary}
\newtheorem{lemma}[theorem]{Lemma}
\newtheorem{proposition}[theorem]{Proposition}
\newtheorem{definition}[theorem]{Definition}
\newtheorem{remark}[theorem]{Remark}

\title{Step Recursion:\\Exact Depth Does Not Determine Algebraic Expressiveness}
\author{Kirill Osipov\\
{\normalsize Independent researcher, The Hague, The Netherlands}\\
{\normalsize \href{mailto:d503@acmer.me}{\texttt{d503@acmer.me}}}}
\date{August 2026}

\begin{document}
\maketitle

\begin{abstract}
Step recursion is a form of bounded recursion in which each recursive call
moves from an input $y$ to a prescribed predecessor $\rho_g(y)$.  Its depth
$D_g(y)$ is the exact number of such moves needed to reach zero.  A natural
question is whether knowing this depth for every input determines the
expressive power of the resulting function algebra.  We prove that it does
not.

We first construct two simple generators with exactly the same depth map but
different step-recursion algebras.  One gives ordinary binary halving,
$b(x)=2x+1$; the other is
\[
 p(0)=1,\qquad p(x)=x+2^{\lambda(x)}\quad(x>0),
\]
where $\lambda$ is binary length.  Although $D_p=D_b$ pointwise, the
predecessor $\rho_p$ cannot be defined from any fixed-stride binary-halving
descent at basis zero.  Thus two recursion schemes may take exactly the same
number of steps on every input and still have different expressive power.

The phenomenon is much larger than this example.  Whenever infinitely many
depth levels allow more than one predecessor arrangement, a single exact depth
profile supports $2^{\aleph_0}$ distinct step-recursion algebras over every
countable basis containing zero and the projections.  In the computable
setting the corresponding effective family has exactly $\aleph_0$ distinct
algebras.  Hence exact recursion depth is an informative resource measure, but
it is not a complete invariant: the geometry of predecessor choices inside
each depth level carries additional algebraic information.
\end{abstract}

\noindent\textbf{Keywords.}
step recursion, bounded recursion, generalized inverse, depth equivalence,
function algebra, Grzegorczyk hierarchy.

\medskip
\noindent\textbf{2020 Mathematics Subject Classification.}
Primary 03D20; Secondary 03D15.

\section{Introduction}

Function algebras characterize computation through initial functions and
closure schemes; classical examples include the Grzegorczyk hierarchy,
bounded recursion, recursion on notation, and safe or ramified recursion
\cite{Grzegorczyk1953,Cobham1965,BellantoniCook1992,Leivant1994,Rose1984}.
Bounded step recursion isolates a different parameter: the predecessor
schedule.  For a strictly increasing generator $g:\Nat\to\Nat$ with
$g(x)\ge x+1$, put
\[
 \rho_g(0)=0,\qquad \rho_g(y)=\min\{z:g(z)\ge y\}\quad(y>0),
\]
and let $D_g(y)$ be the number of iterations of $\rho_g$ needed to reach zero.

Step Recursion was introduced in
\cite{Osipov2026Step}, and generalized-inverse depth profiles and their orbit
semantics were developed in \cite{Osipov2026Resource}.  Those results make
depth a meaningful resource coordinate but do not decide whether it becomes a
complete invariant when known pointwise.  The resource-profile work studies the
depth/orbit data themselves; the present paper holds that entire data fixed and
asks how much algebraic multiplicity remains inside a single exact-depth fibre.
Thus we freeze the entire depth map, equivalently every orbit endpoint, and vary
only the parent assignment inside the resulting levels.  We ask
\[
 \boxed{D_g=D_h\quad\Longrightarrow\quad
        \Alg{\rho_g}{B}=\Alg{\rho_h}{B}\ ?}
\]
with literal pointwise equality.  A negative answer therefore separates
algebras strictly below the resolution of exact depth, rather than by changing
asymptotic recursion length or a fixed stride parameter.  This is qualitatively
stronger than the earlier fixed-stride separations: equality $D_g=D_h$ fixes
not merely an asymptotic depth scale but every stage count and, equivalently,
every orbit endpoint; the remaining freedom lies entirely inside the depth
levels.

\paragraph{Main results.}
First, at basis zero we give an explicit low-complexity separation.  Let
$b(x)=2x+1$, so $\rho_b(y)=\lfloor y/2\rfloor$, and define
\[
 p(0)=1,\qquad p(x)=x+2^{\lambda(x)}\quad(x>0),
\]
where $\lambda$ is binary length.  Then
$p^{[k]}(0)=b^{[k]}(0)=2^k-1$, hence $D_p=D_b$, but a source--offset normal
form for fixed-stride halving yields
\[
 \boxed{\rho_p\notin\Halv{l}\ (l\ge1)},\qquad
 \boxed{\Alg{\rho_p}{B_0}\ne\Alg{\rho_b}{B_0}}.
\]
Here $\Halv{l}$ is the basis-zero closure along $\rho(y)=\lfloor y/2^l\rfloor$;
this is not the three-parameter class $\mathcal H^0_{1,l}$ of
\cite{Osipov2026Step}.
The witness $p$ is a linear-time binary transduction, so the separation is not
an artefact of an intractable generator.

Second, the phenomenon is large.  A level-assignment lemma shows that if
infinitely many levels are flexible, then for every countable basis $B$
containing zero and the projections,
\[
 \boxed{\mathfrak M_B(D)=2^{\aleph_0}}.
\]
For computable $D$ and an effective basis, the computable subfamily has exactly
\[
 \boxed{\mathfrak M_B^{\rm eff}(D)=\aleph_0},
\]
and may be chosen so that no later algebra is contained in an earlier one.
Thus exact depth is a genuine resource coordinate but not a semantic
classification: a single exact-depth fibre can already have maximal algebraic
multiplicity.  The missing information is within-level predecessor geometry.

\section{Framework}

Throughout, $\Nat=\{0,1,2,\ldots\}$.  All functions are total and of positive
finite arity.  Let $B_m$ denote the level-$m$ initial basis used for the
Grzegorczyk classes: zero, successor, all projections, and the first $m$
Grzegorczyk generators in the standard indexing of \cite{Rose1984}.
Only two facts about these bases will be needed here: $B_0$ contains zero,
successor, and projections, and $B_m\subseteq B_{m+1}$.

\begin{definition}[bounded step recursion]
A \emph{descent} is a nondecreasing map $\rho:\Nat\to\Nat$ with
\[
 \rho(0)=0,\qquad \rho(y)<y\quad(y>0).
\]
Given earlier functions $a(\bar x)$, $s(\bar x,z,w)$, and a pointwise bound
$q(\bar x,y)$, bounded step recursion along $\rho$ forms $f$ by
\[
 f(\bar x,0)=a(\bar x),\qquad
 f(\bar x,y)=s\bigl(\bar x,\rho(y),f(\bar x,\rho(y))\bigr)\quad(y>0),
\]
provided $f(\bar x,y)\le q(\bar x,y)$ for all inputs.  For any initial basis
$B$, write
\[
 \Alg{\rho}{B}
 =\operatorname{Cl}_{\circ,\SR_\rho}(B)
\]
for finite-stage closure under composition and bounded step recursion.  The
principal examples below use $B=B_m$.
\end{definition}

\begin{definition}[generator and depth]
A \emph{generator} is a strictly increasing $g:\Nat\to\Nat$ satisfying
$g(x)\ge x+1$.  Its generalized inverse is
\[
 \rho_g(0)=0,\qquad
 \rho_g(y)=\min\{z:g(z)\ge y\}\quad(y>0),
\]
and its depth map is
\[
 D_g(y):=D_{\rho_g}(y)
 =\min\{t:\rho_g^{[t]}(y)=0\}.
\]
The generalized inverse is indeed a descent.  It is nondecreasing because the
threshold set $\{z:g(z)\ge y\}$ moves to the right as $y$ increases, and for
$y>0$ the inequality $g(y-1)\ge y$ gives
\[
 \rho_g(y)\le y-1<y.
\]
The threshold law is
\begin{equation}
 \label{eq:threshold}
 D_g(y)\le k
 \quad\Longleftrightarrow\quad
 y\le g^{[k]}(0).
\end{equation}
\end{definition}

For completeness, the threshold law follows by induction on $k$.  The case
$k=0$ is immediate.  For $y>0$, the generalized-inverse identity
\[
 \rho_g(y)\le a\quad\Longleftrightarrow\quad y\le g(a)
\]
and the induction hypothesis give
\[
 D_g(y)\le k+1
 \quad\Longleftrightarrow\quad
 D_g(\rho_g(y))\le k
 \quad\Longleftrightarrow\quad
 \rho_g(y)\le g^{[k]}(0)
 \quad\Longleftrightarrow\quad
 y\le g^{[k+1]}(0).
\]
Thus $g^{[k]}(0)$ is the final input of depth at most $k$; in the canonical
breadth-first tree it is the final vertex of that depth prefix.

\begin{proposition}[exact depth is exact orbit data]
\label{prop:depth-orbit}
For generators $g,h$ the following are equivalent:
\[
 D_g=D_h,
 \qquad
 g^{[k]}(0)=h^{[k]}(0)\quad\text{for every }k\ge0.
\]
\end{proposition}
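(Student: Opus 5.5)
Both directions will be read off the threshold law \eqref{eq:threshold}, which ties each value $g^{[k]}(0)$ to the sublevel set $\{y: D_g(y)\le k\}$. The one additional fact needed is that the orbit $g^{[k]}(0)$ is strictly increasing in $k$. This follows from $g(x)\ge x+1$, and it guarantees that every $y$ has finite depth and lies below some orbit value.

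For the direction from orbits to depth, assume $g^{[k]}(0)=h^{[k]}(0)$ for all $k$. By \eqref{eq:threshold}, for each $y$ and each $k$,
\[
 D_g(y)\le k \iff y\le g^{[k]}(0)=h^{[k]}(0)\iff D_h(y)\le k .
\]
Two natural numbers with the same set of upper bounds $k$ are equal, so $D_g(y)=D_h(y)$.

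For the converse, assume $D_g=D_h$. The threshold law identifies $g^{[k]}(0)$ as the largest $y$ with $D_g(y)\le k$:
\[
 g^{[k]}(0)=\max\{y: D_g(y)\le k\}.
\]
The maximum exists because the set is nonempty (it contains $0$) and is bounded by $g^{[k]}(0)$, which it also contains. The same description holds for $h$ with $D_h$ in place of $D_g$. Since $D_g=D_h$, the two sets coincide, so their maxima agree and $g^{[k]}(0)=h^{[k]}(0)$ for every $k$.

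There is no real obstacle here. The only point that needs care is stating explicitly that \eqref{eq:threshold} makes $\{y:D_g(y)\le k\}$ the initial segment $[0,g^{[k]}(0)]$, so that its maximum is well-defined and equal to the orbit value. Once that is in place, both directions are immediate.
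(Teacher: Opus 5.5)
Your proof is correct and follows the paper's argument: both directions read the sublevel set $\{y:D_g(y)\le k\}$ as the initial segment $\{0,\ldots,g^{[k]}(0)\}$ via the threshold law \eqref{eq:threshold} and compare these sets for $g$ and $h$. Your appeal to strict increase of the orbit is harmless but not needed, since finiteness of $D_g(y)$ already follows from $\rho_g(y)<y$ for $y>0$.
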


\begin{proof}
By the threshold law, the set of inputs of depth at most $k$ is exactly
$\{0,1,\ldots,g^{[k]}(0)\}$.  Equality of depth maps therefore gives equality
of the endpoint $g^{[k]}(0)$ for every $k$.  Conversely, equality of all orbit
endpoints makes the threshold sets identical for every $k$, hence the depth
maps are identical.
\end{proof}

Proposition~\ref{prop:depth-orbit} is the only orbit-level fact needed in the
sequel: fixing the exact depth map is equivalent to fixing all orbit endpoints.

\begin{definition}[exact-depth equivalence]
For generators $g,h$, write
\[
 g\DepthEq h
 \quad\Longleftrightarrow\quad
 D_g=D_h.
\]
\end{definition}

The relation forgets all local parent information inside each depth level while
retaining the exact number of predecessor stages at every input.

\paragraph{Elementary facts.}
The following three facts will be used repeatedly.

\begin{lemma}[the descent is internal]
\label{lem:descent-internal}
Let $B$ contain zero and the projections.  For every descent $\rho$,
\[
 \rho\in\Alg{\rho}{B}.
\]
\end{lemma}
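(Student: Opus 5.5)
We define $\rho$ by a single application of bounded step recursion along $\rho$ itself, with no parameters besides the recursion variable. Since the paper's functions have positive finite arity, we first handle the unary case; parameters $\bar x$ can then be added as dummies using projections. Concretely, we define
\[
 f(0)=0,\qquad f(y)=s\bigl(\rho(y),f(\rho(y))\bigr)\quad(y>0),
\]
with step function $s(z,w)=z$, which is the projection $\pi^2_1$ and so lies in $B$. The base value is the zero function.

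By the recursion equations, $f(0)=0=\rho(0)$, and for $y>0$ we get $f(y)=\rho(y)$ directly. So $f=\rho$ pointwise, with no induction needed beyond unwinding the definition once. To make the recursion admissible we need a bound $q$ already in the algebra with $f(y)\le q(y)$. We take $q(y)=y$, the projection $\pi^1_1$. This works because $\rho(y)\le y$: for $y>0$ the descent axiom gives $\rho(y)<y$, and $\rho(0)=0$.

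If the formalism wants a parameter vector, we use one dummy variable. Set $a(x)=0$ via zero composed with a projection, $s(x,z,w)=z$, and $q(x,y)=y$. This yields $F(x,y)=\rho(y)$, and composing with projections/zero gives the unary $\rho(y)=F(0,y)$, or we simply regard $\rho$ up to dummy variables.

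The only potential obstacle is formal: whether the recursion scheme allows zero parameters, and whether $q$ must be a previously constructed function. Both are settled by projections being in $B$, so the lemma follows from a single recursion step.
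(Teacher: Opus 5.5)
Your proposal is correct and is essentially the paper's proof: it also defines $F(x,y)$ by step recursion along $\rho$ with base $0$, transition $P^3_2$, and bound $P^2_2(x,y)=y$, and then recovers $\rho(y)=F(0,y)$ by composition with zero and projections. Your dummy-parameter version is the one that fits the paper's positive-arity convention; the parameter-free unary variant would need a $0$-ary base, so it serves only as motivation.
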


\begin{proof}
Define a binary function $F(x,y)$ by bounded step recursion along $\rho$ with
base $F(x,0)=0$ and transition
\[
 F(x,y)=P_2^3\bigl(x,\rho(y),F(x,\rho(y))\bigr)=\rho(y)
 \qquad(y>0).
\]
Use the projection $P_2^2(x,y)=y$ as a bound.  Then
$\rho(y)=F(0,y)$ is obtained by composition with zero and projections.
\end{proof}

\begin{lemma}[countability]
\label{lem:countable}
For every fixed descent $\rho$ and every countable initial basis $B$, the
algebra $\Alg{\rho}{B}$ is countable.
\end{lemma}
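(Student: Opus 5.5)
The plan is to present $\Alg{\rho}{B}$ as a union over stages of a countable family of countable sets, by induction on construction. Put $\mathcal A_0=B$. Let $\mathcal A_{n+1}$ consist of $\mathcal A_n$ together with every function obtained by one application of composition or of bounded step recursion along $\rho$ to functions already in $\mathcal A_n$. By the definition of $\operatorname{Cl}_{\circ,\SR_\rho}$ as finite-stage closure, $\Alg{\rho}{B}=\bigcup_{n}\mathcal A_n$. A countable union of countable sets is countable, so it is enough to show that each $\mathcal A_n$ is countable.

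For the inductive step, assume $\mathcal A_n$ is countable. A composition $h(g_1,\ldots,g_k)$ is determined by the finite tuple $(h,g_1,\ldots,g_k)$ of members of $\mathcal A_n$. The set of finite tuples from a countable set is countable, so there are only countably many compositions.

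Each instance of bounded step recursion along the fixed descent $\rho$ is determined by the triple $(a,s,q)$ with $a,s,q\in\mathcal A_n$. The resulting $f$ is uniquely determined by $a$, $s$ and $\rho$. This is well-founded recursion on $y$, since $\rho(y)<y$ for $y>0$. The bound $q$ only decides whether the scheme is admissible, and it never yields more than one function. Hence the new functions are the image of a subset of the countable set $\mathcal A_n^3$, which is countable. Therefore $\mathcal A_{n+1}$ is countable.

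The only point needing care is that $\rho$ is a single fixed parameter and not one of the inputs ranging over the family: the descent is not counted, so it may be arbitrary (even noncomputable) without affecting countability. The base case is exactly the hypothesis that $B$ is countable. I expect no real obstacle. The one place to check is that each closure operation takes only finitely many arguments from the previous stage, which both schemes satisfy. Then the map from (stage, finite syntactic description) to function is a surjection from a countable set onto $\Alg{\rho}{B}$.
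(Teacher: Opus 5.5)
Your proof is correct and follows essentially the same route as the paper: you build the closure stage by stage, note that each composition or bounded step-recursion instance uses only finitely many earlier functions so each stage stays countable, and conclude by taking the countable union. Your extra remarks are explicit versions of what the paper's one-line argument leaves implicit: the recursion's output is determined by $(a,s)$ and the fixed $\rho$, and the bound $q$ only certifies admissibility.
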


\begin{proof}
At each finite closure stage, only finitely many previously constructed
functions are used in each composition or bounded step-recursion term, so a
countable family produces a countable next stage.  The union of the countably
many finite stages is countable.
\end{proof}

\begin{lemma}[a generator is determined by its descent]
\label{lem:generator-injective}
For every generator $g$ and every $x\in\Nat$,
\[
 g(x)=\max\{y:\rho_g(y)\le x\}.
\]
Consequently the map $g\mapsto\rho_g$ is injective on generators.
\end{lemma}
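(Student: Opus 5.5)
The plan is to use the generalized-inverse identity already established before Proposition~\ref{prop:depth-orbit}, namely
\[
 \rho_g(y)\le a\quad\Longleftrightarrow\quad y\le g(a)\qquad(y>0),
\]
and to read off $g(x)$ as the largest element of the set $S_x=\{y:\rho_g(y)\le x\}$. First I check that this identity, together with $\rho_g(0)=0\le x$, gives
\[
 S_x=\{0\}\cup\{y>0:\ y\le g(x)\}=\{0,1,\ldots,g(x)\},
\]
using $g(x)\ge x+1\ge 1$ so that the second set is nonempty. Hence $S_x$ is a finite initial segment with maximum exactly $g(x)$, which proves the displayed formula.

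The only step needing care is the identity itself, since it was used above without a separate proof. I would verify it directly from the definition $\rho_g(y)=\min\{z:g(z)\ge y\}$. If $y\le g(a)$, then $a$ belongs to the threshold set, so its minimum is at most $a$. Conversely, if $\rho_g(y)=z\le a$, then $g(z)\ge y$, and monotonicity of $g$ gives $g(a)\ge g(z)\ge y$. The threshold set is nonempty because $g(y-1)\ge y$, so the minimum exists. I do not expect a genuine obstacle here. The main points to get right are the separate treatment of $y=0$, where the convention $\rho_g(0)=0$ is used rather than the minimum, and the boundedness of $S_x$, which follows from the identity.

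Injectivity then follows immediately. If $\rho_g=\rho_h$ for generators $g,h$, the sets $S_x$ defined from $\rho_g$ and from $\rho_h$ coincide for every $x$. Taking maxima gives $g(x)=h(x)$ for all $x$, so $g=h$.
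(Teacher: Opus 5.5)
Your proof is correct and follows the paper's route: the generalized-inverse identity identifies $\{y:\rho_g(y)\le x\}$ with $\{0,1,\ldots,g(x)\}$, whose maximum is $g(x)$, and injectivity follows by comparing maxima. Your direct verification of that identity from the definition of $\rho_g$, including the separate handling of $y=0$, is extra care that the paper leaves implicit.
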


\begin{proof}
The generalized-inverse identity gives
\[
 \rho_g(y)\le x\quad\Longleftrightarrow\quad y\le g(x).
\]
Hence the set on the right is exactly $\{0,1,\ldots,g(x)\}$, whose maximum is
$g(x)$.  Thus $\rho_g$ determines every value of $g$.
\end{proof}

\section{An explicit exact-depth separation}
\label{sec:explicit-separation}

Let
\[
 b(x)=2x+1,
 \qquad
 \rho_b(y)=\left\lfloor\frac y2\right\rfloor.
\]
For $x>0$, let
\[
 \blen(x)=1+\lfloor\log_2x\rfloor
\]
be its binary length.  Define the \emph{plateau generator}
\begin{equation}
 \label{eq:plateau-generator}
 p(0)=1,\qquad
 p(x)=x+2^{\blen(x)}\quad(x>0).
\end{equation}

On each dyadic block $2^{k-1}\le x\le2^k-1$, $p(x)=x+2^k$; in binary this
prefixes one $1$, so $p$ is linear-time computable.  Its first parent in each
nontrivial level receives a large child block while the remaining parents
receive one child each.

\begin{lemma}[same orbit, same exact depth]
\label{lem:plateau-depth}
The function $p$ is a generator and, for every $k\ge0$,
\[
 p^{[k]}(0)=2^k-1=b^{[k]}(0).
\]
Consequently
\[
 \boxed{D_p(y)=D_b(y)=\left\lceil\log_2(y+1)\right\rceil.}
\]
\end{lemma}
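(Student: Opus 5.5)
The plan has three steps: check the generator axioms for $p$, compute the orbit of $0$ under $p$ and $b$ by induction, and then read off the depth map from the threshold law \eqref{eq:threshold} together with Proposition~\ref{prop:depth-orbit}.

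\textbf{Step 1: $p$ is a generator.} The condition $p(x)\ge x+1$ is immediate. We have $p(0)=1$, and for $x>0$ we have $p(x)=x+2^{\blen(x)}\ge x+2$.

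For strict monotonicity I would argue blockwise. On a dyadic block $2^{k-1}\le x\le 2^k-1$ we have $p(x)=x+2^k$, which is strictly increasing in $x$. Across a block boundary:
\begin{itemize}
\item at the last point of block $k$, $p(2^k-1)=2^{k+1}-1$;
\item at the first point of block $k+1$, $p(2^k)=2^k+2^{k+1}$, which is larger.
\end{itemize}
The first boundary is also fine, since $p(0)=1<p(1)=3$. Hence $p$ is strictly increasing. The function $b(x)=2x+1$ is clearly a generator.

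\textbf{Step 2: the orbits agree.} I would prove $p^{[k]}(0)=2^k-1$ by induction on $k$. The cases $k=0$ and $k=1$ (where $p(0)=1$) are direct. For $k\ge1$, the point $2^k-1>0$ has binary length $k$, so
\[
p(2^k-1)=2^k-1+2^k=2^{k+1}-1.
\]
In parallel, $b(2^k-1)=2^{k+1}-1$ gives $b^{[k]}(0)=2^k-1$ by the same induction. Proposition~\ref{prop:depth-orbit} then yields $D_p=D_b$.

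\textbf{Step 3: the closed form.} The threshold law gives $D_b(y)\le k$ if and only if $y\le 2^k-1$, which is equivalent to $\log_2(y+1)\le k$. Since $D_b(y)$ is the least such integer $k$, it equals $\lceil\log_2(y+1)\rceil$.

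The only step needing care is the block-boundary comparison in Step 1, and even that is a one-line check. I do not expect a real obstacle here.
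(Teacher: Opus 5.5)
Your proof is correct and follows essentially the same route as the paper: blockwise monotonicity with the boundary check $p(2^k-1)=2^{k+1}-1<3\cdot2^k=p(2^k)$, then induction on the orbit using $\blen(2^k-1)=k$, then the threshold law \eqref{eq:threshold} for the depth formula. Your explicit check $p(0)=1<p(1)=3$ and your appeal to Proposition~\ref{prop:depth-orbit} for $D_p=D_b$ are small additions that the paper leaves implicit.
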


\begin{proof}
Inside a dyadic block $[2^{k-1},2^k-1]$, the map $p$ increases by one.  Across
the boundary,
\[
 p(2^k-1)=2^{k+1}-1
 <3\cdot2^k=p(2^k),
\]
so $p$ is strictly increasing and clearly satisfies $p(x)\ge x+1$.
Moreover, if $x=2^k-1$, then $\blen(x)=k$ for $k\ge1$, hence
\[
 p(2^k-1)=2^{k+1}-1.
\]
Together with $p(0)=1$, induction gives the orbit identity.  The depth formula
now follows from the threshold law \eqref{eq:threshold}.
\end{proof}

The generalized inverse of $p$ is explicit.

\begin{lemma}[plateau predecessor]
\label{lem:plateau-rho}
For $k\ge1$ and $2^k\le y\le2^{k+1}-1$,
\[
 \rho_p(y)=
 \begin{cases}
 2^{k-1},&2^k\le y\le3\cdot2^{k-1},\\[2mm]
 y-2^k,&3\cdot2^{k-1}<y\le2^{k+1}-1.
 \end{cases}
\]
Also $\rho_p(1)=0$.
\end{lemma}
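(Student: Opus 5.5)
The plan is to compute $\rho_p(y)=\min\{z:p(z)\ge y\}$ directly, using the explicit block description of $p$ from the proof of Lemma~\ref{lem:plateau-depth}. The key tool is the generalized-inverse identity $\rho_p(y)\le a\iff y\le p(a)$. This reduces the problem to locating $y$ between consecutive values of $p$. Since $p$ is strictly increasing, $\rho_p(y)=z$ exactly when $p(z-1)<y\le p(z)$ (with $z\ge1$), or $z=0$ when $y\le p(0)$. The case $y=1$ is immediate: $p(0)=1\ge1$, so $\rho_p(1)=0$.

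Next, I will list the values of $p$ near the range $[2^k,2^{k+1}-1]$ for $k\ge1$.
\begin{itemize}
\item On the block $2^{k-1}\le x\le 2^k-1$ we have $p(x)=x+2^k$. These values fill the interval $[3\cdot2^{k-1},\,2^{k+1}-1]$ consecutively.
\item The predecessor value is $p(2^{k-1}-1)=2^k-1$. This holds by Lemma~\ref{lem:plateau-depth} for $k\ge2$, and it is $p(0)=1$ for $k=1$.
\end{itemize}
So $p(2^{k-1}-1)=2^k-1<2^k$. Every $z<2^{k-1}$ has $p(z)\le 2^k-1<y$, hence $\rho_p(y)\ge 2^{k-1}$ for every $y\ge2^k$.

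Now I split into the two cases of the statement.
\begin{itemize}
\item If $2^k\le y\le 3\cdot2^{k-1}=p(2^{k-1})$, then $z=2^{k-1}$ is the least witness, so $\rho_p(y)=2^{k-1}$. This is the "large child block" of the first parent.
\item If $3\cdot2^{k-1}<y\le2^{k+1}-1$, then $y=p(z)$ for the unique $z=y-2^k$ in the block $(2^{k-1},2^k-1]$. Strict monotonicity gives $p(z-1)=y-1<y$. Hence the minimum is $z=y-2^k$.
\end{itemize}

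Every step is routine. The only point needing care is the boundary bookkeeping, which I will check separately: for $k=1$ the first case is the single point $y=2$ and the second is $y=3$, while $p(0)=1$ and $p(1)=3$ must be used in place of the block formula. Throughout, I will keep the endpoint $3\cdot2^{k-1}$ in the first case, consistent with the non-strict threshold $\ge$ in the definition of $\rho_p$.
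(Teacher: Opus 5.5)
Your proof is correct and matches the paper's argument: you locate $y$ between $p(2^{k-1}-1)=2^k-1$ and $p(2^{k-1})=3\cdot2^{k-1}$, then invert the block formula $p(z)=z+2^k$ on the rest of the parent level. One small slip is in your $k=1$ bookkeeping: since $3\cdot2^{0}=3$, the point $y=3$ belongs to the first case and the second case is empty, but this is harmless because both formulas give $\rho_p(3)=1$.
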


\begin{proof}
The possible parents of the displayed level are the integers
$2^{k-1},\ldots,2^k-1$.  Immediately before the first one,
\[
 p(2^{k-1}-1)=2^k-1,
\]
whereas
\[
 p(2^{k-1})=3\cdot2^{k-1}.
\]
Thus all integers from $2^k$ through $3\cdot2^{k-1}$ have the same parent
$2^{k-1}$.  After that point, $p(z)=z+2^k$ throughout the rest of the parent
level, so generalized inversion gives $z=y-2^k$.
\end{proof}

We now prove the symbolic normal form needed for the separation.  The proof is
split into a qualitative conservation lemma and a quantitative evaluation
lemma.  This makes explicit the two facts used later: no derivation can merge
two independent large root sources into one output, and the source-free
numerical correction created by a fixed derivation is only polynomially large.

Fix $l\ge1$, put $B=2^l$, and write
\[
 \rho_l(y)=\left\lfloor\frac yB\right\rfloor.
\]
Give a derivation its \emph{construction rank}: initial-function leaves have rank zero, and a composition or step-recursion node has rank one plus the maximum rank of its proper subderivations.

For a fixed root tuple $\bar x=(x_1,\ldots,x_r)$, a \emph{descriptor} is either
\[
 \mathbf C(c)
 \qquad\text{or}\qquad
 \mathbf S(i,a,d),
\]
with represented values
\[
 c,
 \qquad
 \left\lfloor\frac{x_i}{B^a}\right\rfloor+d.
\]
All descriptor fields are nonnegative integers.

\begin{lemma}[one-source conservation]
\label{lem:one-source-conservation}
Fix a derivation $\delta$ over $B_0$ using $\rho_l$ and a concrete root tuple
$\bar x$.  Every value occurrence in the complete evaluation of $\delta$ admits
a descriptor relative to the same root tuple.  In particular, each occurrence
contains at most one root source $x_i$.
\end{lemma}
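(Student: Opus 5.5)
The plan is to prove a stronger, compositional statement by induction on construction rank, and to read descriptors as being \emph{propagated} through the evaluation rather than merely assigned after the fact. A bare existence claim would be trivial, since every concrete value $v$ is $\mathbf C(v)$; the content needed later is that descriptors are produced by the rules below.

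\textbf{Strengthened claim.} For every derivation $\delta$ of arity $n$, every root tuple $\bar x$, and every argument tuple $(u_1,\dots,u_n)$ in which each $u_j$ carries a descriptor relative to $\bar x$, every value occurrence in the evaluation of $\delta$ at $(u_1,\dots,u_n)$ carries a descriptor relative to $\bar x$. The lemma itself is the special case where the arguments are the roots, with $x_i=\mathbf S(i,0,0)$.

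\textbf{Carry identity.} The one arithmetic fact used throughout is the following: for $q,d\ge0$ and $a\ge0$,
\[
 \left\lfloor\frac{q+d}{B^{a}}\right\rfloor
 =\left\lfloor\frac{q}{B^{a}}\right\rfloor+d',
 \qquad
 \left\lfloor\frac{d}{B^{a}}\right\rfloor\le d'\le\left\lfloor\frac{d}{B^{a}}\right\rfloor+1 .
\]
It holds because $(q\bmod B^a)+d<B^a+d$. Combined with $\lfloor\lfloor x/B^{a'}\rfloor/B^{a}\rfloor=\lfloor x/B^{a+a'}\rfloor$, it shows that dividing a descriptor $\mathbf S(i,a',d)$ by $B^a$ again yields a descriptor $\mathbf S(i,a+a',d')$ with $d'\ge0$. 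Dividing a constant $\mathbf C(c)$ gives a constant.

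\textbf{Rank zero.} For initial functions the outputs are handled directly. Zero gives $\mathbf C(0)$. Successor sends $\mathbf C(c)\mapsto\mathbf C(c+1)$ and $\mathbf S(i,a,d)\mapsto\mathbf S(i,a,d+1)$. Projections copy a descriptor. No rule ever adds two descriptors together, and this is the structural reason only one source survives.

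\textbf{Composition.} For $\delta=h(g_1,\dots,g_m)$, the induction hypothesis first gives descriptors for all occurrences inside each $g_k$ and for their outputs. Applying the hypothesis to $h$ with these outputs as its argument descriptors then covers every occurrence inside $h$. Because the claim already quantifies over descriptor-carrying arguments, no separate substitution step is needed.

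\textbf{Step recursion.} For $f$ defined along $\rho_l$, run a secondary induction on the orbit $y,\rho_l(y),\rho_l^{[2]}(y),\dots,0$. If $y$ carries a descriptor, each $\rho_l^{[t]}(y)$ does as well, by the carry identity with $a=1$ applied repeatedly. The base value $f(\bar u,0)=a(\bar u)$ is covered by the hypothesis for $a$. Each transition $s(\bar u,\rho_l(z),f(\bar u,\rho_l(z)))$ has all arguments carrying descriptors, so the hypothesis for $s$ covers its internal occurrences and its output. The bound $q$ does not need to be evaluated, only checked to hold.

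\textbf{Main obstacle.} I expect the main difficulty to be bookkeeping rather than arithmetic: one must make precise what a ``value occurrence'' is inside the unfolded recursion tree, so that the two nested inductions (on rank, and along the orbit) are well founded. The carry identity also has to be applied consistently so that the offsets $d$ stay nonnegative.
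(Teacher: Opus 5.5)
Your proof is correct. Its skeleton matches the paper's: induction on construction rank, the same base cases, the same action of $\rho_l$ on descriptors, and a secondary induction along the $\rho_l$-orbit at step-recursion nodes. The genuine difference is at composition.

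The paper keeps the unstrengthened hypothesis. It treats the values of the argument derivations as the head's own root tuple and gets a descriptor relative to those formal inputs. It then substitutes the argument descriptors back in. That step needs the nested-floor identity $\lfloor(\lfloor x_i/B^a\rfloor+d)/B^b\rfloor+e=\lfloor x_i/B^{a+b}\rfloor+d'$, which the paper only writes down later, in the quantitative symbolic-evaluation lemma. Transitions at recursion stages are handled the same way.

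You instead strengthen the hypothesis to arbitrary descriptor-carrying arguments relative to one fixed root tuple. Descriptors are then modified only by successor and by single applications of $\rho_l$, and no substitution step is ever needed. Your carry identity with $a=1$ is exactly the paper's $r_{i,a}$ rule.

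Your route gains two things. It gives one uniform invariant that automatically covers occurrences inside the head; the paper's text explicitly treats only the head's output. It also makes the later field bound almost free, since $c$ and $d$ grow only under successor and $a$ only under descents. The paper's route gains modularity: each subderivation is summarized once relative to its own inputs, which fits its later counting of substitutions as primitive symbolic operations.

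Your remark that the bare existence claim is vacuous (every value is $\mathbf C(v)$) is accurate, and it applies to the paper's statement too. In both proofs the real content is the rule-based propagation of descriptors.
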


\begin{proof}
Proceed by induction on construction rank and, within a rank, on syntax.  Zero
produces $\mathbf C(0)$, a projection passes one existing descriptor, and
successor increments its constant or offset field.  The descent preserves the
two descriptor types exactly:
\[
 \rho_l(\mathbf C(c))=\mathbf C(\lfloor c/B\rfloor)
\]
and
\[
 \rho_l(\mathbf S(i,a,d))
 =\mathbf S\!\left(i,a+1,
   \left\lfloor\frac{r_{i,a}+d}{B}\right\rfloor\right),
 \qquad
 r_{i,a}:=\left\lfloor\frac{x_i}{B^a}\right\rfloor\bmod B.
\]
Thus a descent cannot introduce a second root source.

At a composition node, evaluate the argument derivations first and regard their
descriptors as the concrete inputs of the fixed head derivation.  By the
induction hypothesis for that head, its output selects at most one of those
formal inputs as a source; substituting the corresponding argument descriptor
therefore still leaves at most one original root source.

At a step-recursion node, the outer parameters and each address occurrence
already have descriptors.  Induct along the finite $\rho_l$-schedule.  The
base value has a descriptor by the construction-rank hypothesis.  At a later
stage the transition is a fixed proper subderivation applied to parameter,
address, and previous-state descriptors, so the same hypothesis again returns
a descriptor with at most one root source.  This proves the claim for all
nested recursion ranks.
\end{proof}

\begin{lemma}[logarithmic self-bound]
\label{lem:log-self-bound}
For every fixed $K\ge1$ and integer $e\ge1$ there is $C\ge1$ such that, for all
$N\ge2$ and $S\ge1$,
\[
 S\le K\bigl(N+\log_2(S+2)\bigr)^e
 \quad\Longrightarrow\quad
 S\le N^C.
\]
\end{lemma}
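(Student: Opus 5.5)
The plan is a case split on whether $S$ is small or large compared with a power of $N$, using only that $\log_2(S+2)$ grows slower than any positive power of $S$.

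\textbf{Step 1 (the logarithm is negligible).} Since $\log_2(S+2)/S^{1/(2e)}\to0$ as $S\to\infty$, there is a constant $S_0=S_0(K,e)$ such that for all $S\ge S_0$,
\[
 2^eK\bigl(\log_2(S+2)\bigr)^e\le \tfrac12 S.
\]
This follows from $\bigl(\log_2(S+2)\bigr)^e=o(S^{1/2})$. Finding an explicit $S_0$ is routine, and only its existence is needed.

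\textbf{Step 2 (split the sum).} Use $(u+v)^e\le 2^{e-1}(u^e+v^e)\le 2^e\max(u,v)^e$ with $u=N$ and $v=\log_2(S+2)$. The hypothesis then gives $S\le 2^eK\max\bigl(N,\log_2(S+2)\bigr)^e$, and there are two cases.
\begin{itemize}
\item If the maximum is $N$, then $S\le 2^eK N^e$.
\item If the maximum is $\log_2(S+2)$, then $S\le 2^eK(\log_2(S+2))^e$. By Step~1 this is impossible when $S\ge S_0$, so $S<S_0$.
\end{itemize}
In every case, therefore, $S\le \max(2^eKN^e,\;S_0)$.

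\textbf{Step 3 (absorb the constants into the exponent).} Because $N\ge2$, any constant $A$ satisfies $A\le N^{\lceil\log_2 A\rceil}$. So $2^eKN^e\le N^{e+e+\lceil\log_2K\rceil}$ and $S_0\le N^{\lceil\log_2 S_0\rceil}$. Choose
\[
 C=\max\bigl(2e+\lceil\log_2K\rceil,\ \lceil\log_2 S_0\rceil,\ 1\bigr),
\]
which depends only on $K$ and $e$. Then $S\le N^C$ for all $N\ge2$ and $S\ge1$.

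The only point requiring any care is Step~1, namely making the existence of $S_0$ uniform in $N$. This is automatic because the inequality in Step~1 does not involve $N$. If an explicit value is wanted, one can use $\log_2(S+2)\le 2\sqrt[2e]{S}\cdot c_e$ for a suitable $c_e$, but existence suffices for the lemma.
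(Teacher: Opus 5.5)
Your proof is correct and takes essentially the paper's route. Your Step~1 constant is exactly the paper's $S_0$, since $2^eK(\log_2(S+2))^e=K(2\log_2(S+2))^e$, and splitting on whether $N$ or $\log_2(S+2)$ dominates amounts to the paper's comparison of $S$ with $2^N$. The one small gain from your $\max$-decomposition is that the log-dominated case gives $S<S_0$ uniformly in $N$, so you avoid the paper's separate handling of the finitely many small $N$ and obtain an explicit $C$.
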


\begin{proof}
Choose $S_0$ so large that
$K(2\log_2(S+2))^e<S$ for every $S\ge S_0$.  If
$S>2^N$ and $2^N\ge S_0$, then $N<\log_2 S$ and the assumed inequality gives
\[
 S\le K\bigl(2\log_2(S+2)\bigr)^e<S,
\]
a contradiction.  Hence for all sufficiently large $N$ one has $S\le2^N$.
Substitution back into the hypothesis gives
\[
 S\le K(2N+2)^e,
\]
which is bounded by $N^C$ after increasing $C$.  The finitely many smaller
values of $N$ are absorbed by one further increase of $C$.
\end{proof}

For the quantitative estimate, a \emph{primitive symbolic operation} is one
local action in the fully expanded symbolic evaluation: evaluation of an
initial-function occurrence, one application of $\rho_l$, or one descriptor
substitution or normalization associated with composition or a recursion
stage.  As in the semantics of bounded recursion, the declared bound is an
admissibility certificate and is not evaluated as part of the operational
recurrence; if the same earlier function is used elsewhere as ordinary data,
that occurrence is counted normally.  A proper subderivation contributes all
of its own primitive operations, rather than one additional aggregate unit.
Any equivalent fixed local accounting convention changes the count only by a
derivation-dependent constant factor.

\begin{lemma}[quantitative symbolic evaluation]
\label{lem:quantitative-symbolic}
Fix a derivation $\delta$ over $B_0$ using $\rho_l$.  On a root tuple of total
binary length $N\ge2$, let $S$ be the number of primitive symbolic operations
in its complete evaluation.  There are constants $K_\delta,e_\delta,A_\delta$
such that
\[
 S\le K_\delta\bigl(N+\log_2(S+2)\bigr)^{e_\delta},
\]
and every descriptor field occurring in the evaluation is at most
$A_\delta+S$.
\end{lemma}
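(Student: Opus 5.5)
The plan is to prove both claims together by induction on construction rank, and within a rank on syntax, along the same skeleton as Lemma~\ref{lem:one-source-conservation}. The invariant is stated relative to the descriptors of the inputs rather than the raw numbers. If a fixed derivation is fed input descriptors whose fields are at most $F$, and whose sources have total root length $N$, then two things should hold. First, its number of primitive operations $S$ satisfies $S\le K(N+\log_2(F+S+2))^{e}$. Second, every field produced is at most $F+A+S$. The statement of the lemma is the special case of root inputs, each $\mathbf S(i,0,0)$, with $F=0$.

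The field bound is the easy half. Each primitive operation increases the maximal field by at most one, and otherwise preserves or decreases it. Zero and projections create nothing new. Successor adds one. $\rho_l$ replaces $d$ by $\lfloor (r+d)/B\rfloor\le d$, since $r<B$. In a $\mathbf C$ descriptor the constant only shrinks. The exponent field $a$ grows by one per $\rho_l$ application, so it is also at most $S$. Substitution at a composition node only copies existing fields, and adds offsets when a successor acts on a substituted source. So every field is bounded by the initial constants of $\delta$, collected into $A_\delta$, plus the number of operations performed.

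For the operation count, the key observation is about recursion lengths. A step-recursion node along $\rho_l$ on an address descriptor of value $v$ runs for $D_{\rho_l}(v)\le \lceil \log_B(v+1)\rceil+1$ stages. Its value is at most $x_i/B^a+d\le 2^{N}+A+S$, so the number of stages is at most $N+\log_2(A+S+2)$, up to a constant factor. By the induction hypothesis, each stage calls a fixed proper subderivation on descriptors whose fields are bounded by the current running field bound, which is at most $A+S$. It therefore costs at most $K'(N+\log_2(S+2))^{e'}$. Multiplying gives an exponent $e'+1$. Composition nodes add the costs of finitely many subderivations, and this sum preserves the polynomial form. All the constants depend only on $\delta$, because the syntax tree is finite.

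The main obstacle is circularity. The bound on stage counts refers to $S$, the total count, including the very recursion being bounded. The bound on fields is also in terms of $S$. That is why the lemma is phrased as an implicit self-bound rather than a closed one. I would handle this by monotonicity: every intermediate count is at most the final $S$, so each inner quantity can be replaced by $S$, and the inequality closes into the displayed form. Lemma~\ref{lem:log-self-bound} later untangles it. A second subtlety is the bound certificate $q$. By the stated convention it is not evaluated, so it contributes no operations. I would remark that its only role is ensuring totality, which we do not need here, since the descriptor semantics computes the values exactly.
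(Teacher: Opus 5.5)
Your proposal is correct and follows essentially the paper's route. Both arguments first get a field bound from the local effect of each primitive operation, then induct on construction rank, where a step-recursion node contributes a schedule of length $O(N+\log_2(S+2))$ times a polynomial per-stage cost, and the circularity is closed by noting that every inner count and field is dominated by the final $S$. The remaining differences are cosmetic: you relativize the invariant to an input field bound $F$ and evaluate heads directly on argument descriptors, so fields grow by at most one per operation. The paper instead fixes the global quantity $L^*=N+\log_2(S+2)$, bounds all descriptor bitlengths by $O_\delta(L^*)$, and treats composition by substitution, where exponents add and offsets satisfy $d'\le d+e+1$.
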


\begin{proof}
The field bound follows directly from the primitive rules.  A projection only
passes fields, successor increments one field, and one descent increments $a$
by one while replacing $d$ by
$\lfloor(r_{i,a}+d)/B\rfloor$; zero or a source-free reset contributes only a
fixed constant.  Composition does not create a hidden larger field.  Indeed,
substituting a source descriptor into another source descriptor gives
\[
 \left\lfloor
   \frac{\left\lfloor x_i/B^a\right\rfloor+d}{B^b}
 \right\rfloor+e
 =
 \left\lfloor\frac{x_i}{B^{a+b}}\right\rfloor+d',
\]
where
\[
 0\le d'\le \left\lfloor\frac d{B^b}\right\rfloor+e+1\le d+e+1.
\]
Substitution into a constant descriptor remains source-free.  Thus composing
or nesting descriptors adds exponents and offsets at most linearly in the
fields already present; after charging the finitely many local normalization
steps to the symbolic evaluation, no primitive symbolic operation increases a
field by more than a derivation-dependent constant.  This is absorbed into
$A_\delta+S$.

The preceding calculation also rules out a hidden blow-up at composition
nodes.  After an argument descriptor is substituted into the unique formal
source selected by the head derivation, the result still stores only one root
source index together with exponent and offset fields bounded by
$A_\delta+S$; the full numerical value of that root source is never copied
into a descriptor field.  Consequently every non-source field needs only
$O_\delta(\log(S+2))$ bits, while access to the surviving root source costs at
most its original $N$ input bits.  This is the point at which the self-bound
below depends on $N+\log(S+2)$ rather than on an uncontrolled intermediate
value.

Put
\[
 L^*=N+\log_2(S+2).
\]
Every descriptor encountered in the evaluation has binary length
$O_\delta(L^*)$.  We now induct on construction rank.  Initial functions and
composition nodes contribute only fixed finite combinations of lower-rank
subevaluations, hence polynomially many primitive operations in $L^*$ by the
induction hypothesis.  At a step-recursion node, every recursion address has
bitlength $O_\delta(L^*)$, so repeated division by the fixed $B=2^l$ reaches
zero in $O_\delta(L^*)$ stages.  Each stage invokes only finitely many fixed
proper subderivations, all of lower construction rank and all on descriptors of
bitlength $O_\delta(L^*)$.  Multiplying their polynomial bounds by the
$O_\delta(L^*)$ schedule length still gives a polynomial in $L^*$.  Taking the
maximum exponent and constant over the finitely many nodes of $\delta$ yields
the displayed inequality.
\end{proof}

\begin{proposition}[source--offset normal form for fixed-stride halving]
\label{prop:source-offset}
Fix $l\ge1$ and let $\rho_l(y)=\lfloor y/2^l\rfloor$.  For every fixed
derivation $\delta$ over $B_0$ using $\rho_l$, there is a constant
$C_{\delta,l}$ such that on every root input tuple
$\bar x=(x_1,\ldots,x_r)$ of total binary length $N\ge2$, the output has one of
the forms
\[
 c,
 \qquad\text{or}\qquad
 \left\lfloor\frac{x_i}{2^{la}}\right\rfloor+d
\]
for some root coordinate $i$, where
\[
 a,c,d\le N^{C_{\delta,l}}.
\]
The choice of form, source coordinate, and numerical fields may depend on the
input tuple.  Consequently every function in
$\Halv{l}:=\Alg{\rho_l}{B_0}$ has a pointwise source--offset normal form
with a single root source and polynomially bounded fields.
\end{proposition}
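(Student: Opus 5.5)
The plan is to assemble the proposition directly from the three preceding lemmas. Fix $l\ge1$, put $B=2^l$, fix a derivation $\delta$ over $B_0$ using $\rho_l$, and fix a root tuple $\bar x$ of total binary length $N\ge2$.

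\textbf{Step 1 (shape of the output).} Apply Lemma~\ref{lem:one-source-conservation} to the final output occurrence of the complete evaluation of $\delta$ on $\bar x$. It supplies a descriptor relative to $\bar x$, either $\mathbf C(c)$ or $\mathbf S(i,a,d)$. Its represented value is $c$ or $\lfloor x_i/B^a\rfloor+d=\lfloor x_i/2^{la}\rfloor+d$. This already gives the two displayed forms with a single root source. Since the lemma is applied pointwise, the form, the index $i$, and the fields may depend on $\bar x$, exactly as the statement allows.

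\textbf{Step 2 (field bounds).} Let $S$ be the number of primitive symbolic operations in this evaluation. Lemma~\ref{lem:quantitative-symbolic} gives constants $K_\delta,e_\delta,A_\delta$ such that
\[
S\le K_\delta\bigl(N+\log_2(S+2)\bigr)^{e_\delta},
\]
and every field, in particular $a,c,d$, is at most $A_\delta+S$. Feed the inequality into Lemma~\ref{lem:log-self-bound} with $K=\max(K_\delta,1)$ and $e=\max(e_\delta,1)$. If $S=0$, replace $S$ by $1$, which only weakens the conclusion. This yields $S\le N^{C}$ for a constant $C$ depending only on $\delta$, hence on $\delta$ and $l$. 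Then $a,c,d\le A_\delta+N^{C}\le N^{C_{\delta,l}}$ for a slightly larger exponent $C_{\delta,l}$, using $N\ge2$ to absorb the additive constant $A_\delta$.

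\textbf{Step 3 (extension to the algebra).} Every $f\in\Halv{l}=\Alg{\rho_l}{B_0}$ is, by definition of the closure, the value of some finite derivation $\delta$ over $B_0$ using $\rho_l$. So Steps~1--2 apply to $f$ on all tuples of length $N\ge2$. The finitely many tuples with $N<2$ involve only bounded inputs. There the output is some fixed number, trivially of the form $\mathbf C(c)$ with a bounded constant, so they do not affect the normal-form claim.

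\textbf{Main obstacle.} The content lies in Lemma~\ref{lem:quantitative-symbolic}, which is assumed here, so no step of this argument is genuinely hard. The delicate point in the assembly is making sure the constants are uniform in $\bar x$: $K_\delta$, $e_\delta$ and $A_\delta$ must depend only on $\delta$ and $l$. I would state this explicitly when invoking the lemma, so that the exponent $C_{\delta,l}$ is a single constant valid for all root tuples.
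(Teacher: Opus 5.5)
Your proposal is correct and follows the paper's own proof essentially step for step. You take the output descriptor from the one-source conservation lemma, and you convert the self-bound of Lemma~\ref{lem:quantitative-symbolic} into $S\le N^{C}$ via Lemma~\ref{lem:log-self-bound}. You then absorb the field estimate $a,c,d\le A_\delta+S$ into a larger exponent using $N\ge2$. Your extra bookkeeping (normalizing $K,e\ge1$, the trivial case $S=0$, and the remark on tuples with $N<2$) is harmless; it only spells out what the paper compresses into ``enlarging the exponent.''
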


\begin{proof}
Lemma~\ref{lem:one-source-conservation} gives the two descriptor types for the
output occurrence.  Lemma~\ref{lem:quantitative-symbolic} gives the
self-bound for the number $S$ of primitive symbolic operations, and
Lemma~\ref{lem:log-self-bound} turns it into $S\le N^{C}$ for a fixed $C$.
The field estimate in Lemma~\ref{lem:quantitative-symbolic} then makes
$a,c,d$ polynomially bounded as well.  Enlarging the exponent absorbs all
fixed constants and gives the stated $C_{\delta,l}$.
\end{proof}

\begin{remark}[structural content of the normal form]
\label{rem:normal-form-content}
The proposition is structural, not merely a running-time bound: on each input a
basis-zero fixed-stride derivation has at most one large root source, followed
by a fixed-base quotient and a polynomially bounded correction.  The plateau
predecessor violates exactly this constraint.
\end{remark}

\begin{theorem}[explicit exact-depth separation]
\label{thm:effective-separation}
The explicit computable generators $b$ and $p$ satisfy
\[
 b\DepthEq p,
\]
but the predecessor $\rho_p$ is excluded from every basis-zero fixed-stride
binary class:
\[
 \boxed{\rho_p\notin\Halv{l}\qquad(l\ge1).}
\]
In particular,
\[
 \boxed{\rho_p\notin\Alg{\rho_b}{B_0}},
 \qquad
 \boxed{\Alg{\rho_p}{B_0}\ne\Alg{\rho_b}{B_0}.}
\]
\end{theorem}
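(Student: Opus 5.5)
The equivalence $b\DepthEq p$ is immediate from Lemma~\ref{lem:plateau-depth}, so the whole argument is the exclusion $\rho_p\notin\Halv{l}$. I would argue by contradiction: fix $l\ge1$ and suppose some derivation $\delta$ over $B_0$ using $\rho_l$ computes $\rho_p$ as a unary function. Proposition~\ref{prop:source-offset} with $r=1$ then says that on every input $y$ of binary length $N\ge2$ the value $\rho_p(y)$ is either a constant $c\le N^{C}$ or $\lfloor y/2^{la}\rfloor+d$ with $a,d\le N^{C}$, where $C=C_{\delta,l}$. The goal is to find inputs where $\rho_p$ fits neither shape.

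The natural test inputs lie in the upper half of a level, where Lemma~\ref{lem:plateau-rho} gives $\rho_p(y)=y-2^k$ for $3\cdot2^{k-1}<y\le2^{k+1}-1$. Here $N=k+1$ and the value is exponentially large, at least $2^{k-1}$, so the constant form is excluded once $2^{k-1}>(k+1)^C$. For the source form, take $a=0$. Then $y-2^k=y+d$ is impossible because $d\ge0$. For $a\ge1$ we have $\lfloor y/2^{la}\rfloor+d\le y/2^l+d\le 2^{k+1-l}+(k+1)^C$. We need $y-2^k$ to exceed this. Since the value can be as small as about $2^{k-1}$, this works directly when $l\ge3$, but I would rather not depend on $l$. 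Instead I would choose $y$ near the top, say $y=2^{k+1}-1$, so that $\rho_p(y)=2^k-1$. Then $2^k-1$ must equal $\lfloor(2^{k+1}-1)/2^{la}\rfloor+d$. For $l=1,a=1$ this gives $2^k-1+d$ with $d=0$, which is consistent, so a single input does not suffice.

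The real obstacle, then, is that pointwise normal forms allow the form to vary with the input, and for $l=1$ the quotient form can match isolated values. To handle this I would use a whole family of inputs in the same level: $y=2^k+m$ with $2^{k-1}<m<2^k$, where $\rho_p(y)=m$. For $a\ge1$, $\lfloor y/2^{la}\rfloor+d\le 2^{k+1-l}+N^C$, and $m$ can be made larger than this only when $l\ge2$. For $l=1$, $a=1$, we get $\lfloor y/2\rfloor+d=2^{k-1}+\lfloor m/2\rfloor+d$, so we need $m=2^{k-1}+\lfloor m/2\rfloor+d$, which forces $m\approx 2^k-2d$. Taking $m=2^{k-1}+2^{k-2}$ (so $y=2^k+m$ lies above $3\cdot2^{k-1}$) then requires $d=2^{k-2}-2^{k-3}$, which is exponentially large and violates $d\le N^C$. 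In general, choosing $m$ near the lower end of the upper half makes every case $a\ge1$ require either $m\le 2^{k+1-la}+N^C$ or an exponentially large offset. Cases $a\ge2$ give values at most $2^{k-1}/2^{l(a-1)}\cdot\ldots+N^C$, which is too small, and $a=0$ is excluded by sign. This contradicts Proposition~\ref{prop:source-offset} for large $k$.

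Finally, $\rho_b=\rho_1$, so $\Alg{\rho_b}{B_0}=\Halv{1}$ and $\rho_p\notin\Alg{\rho_b}{B_0}$. By Lemma~\ref{lem:descent-internal}, $\rho_p\in\Alg{\rho_p}{B_0}$, so the two algebras differ.
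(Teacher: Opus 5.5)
Your argument is essentially the paper's: a contradiction with Proposition~\ref{prop:source-offset}, the constant form ruled out by size, and a case split $a=0$, $la=1$, $la\ge2$ on one explicit input per level. The only difference is the test input: the paper uses the plateau endpoint $x_k=3\cdot2^{k-1}$ (value $2^{k-1}$), while your $y=7\cdot2^{k-2}$ (value $3\cdot2^{k-2}$) in the translation region works equally well. One slip: in the case $l=a=1$ the required offset is $d=\lceil m/2\rceil-2^{k-1}=-2^{k-3}$ (so $m\approx2^k+2d$, not $2^k-2d$), which means that case fails because $d\ge0$ rather than because of the polynomial bound, and your conclusion still stands.
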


\begin{proof}
Exact depth equivalence is Lemma~\ref{lem:plateau-depth}.  Fix $l\ge1$, put
$B=2^l$, and suppose toward a contradiction that
$\rho_p\in\Halv{l}$.  Let $C$ be the constant supplied by
Proposition~\ref{prop:source-offset} for a fixed derivation of this unary
function, and put
\[
 x_k=3\cdot2^{k-1}\qquad(k\ge1).
\]
Its binary length is $N_k=k+1$, and Lemma~\ref{lem:plateau-rho} gives
\[
 \rho_p(x_k)=2^{k-1}.
\]

For sufficiently large $k$, this value cannot be represented by a source-free
descriptor, because $2^{k-1}>(k+1)^C$.  Hence
\[
 2^{k-1}
 =\left\lfloor\frac{3\cdot2^{k-1}}{2^{la}}\right\rfloor+d
\]
for some $a,d\le(k+1)^C$.
If $a=0$, the shifted source already exceeds the target.  If $la=1$, then
necessarily $l=a=1$ and
\[
 \left\lfloor\frac{x_k}{2}\right\rfloor
 =3\cdot2^{k-2}>2^{k-1},
\]
again impossible because $d\ge0$.  Finally, if $la\ge2$, then
\[
 \left\lfloor\frac{x_k}{2^{la}}\right\rfloor
 \le3\cdot2^{k-3},
\]
so
\[
 d\ge2^{k-1}-3\cdot2^{k-3}=2^{k-3},
\]
contradicting $d\le(k+1)^C$ for large $k$.  Therefore
$\rho_p\notin\Halv{l}$ for every fixed $l\ge1$.

Taking $l=1$ gives $\rho_p\notin\Alg{\rho_b}{B_0}$.  By
Lemma~\ref{lem:descent-internal}, $\rho_p\in\Alg{\rho_p}{B_0}$, so the two
algebras are different.
\end{proof}

\section{Exact-depth fibres and algebraic multiplicity}
\label{sec:flexible-fibers}

\begin{definition}[level sizes and flexibility]
Let $D$ be a depth map realised by at least one generator.  Put
\[
 A_k(D):=\max\{y:D(y)\le k\}\qquad(k\ge0),
\]
so $A_0(D)=0$, and define
\[
 V_0(D)=\{0\},\qquad
 V_k(D)=\{A_{k-1}(D)+1,\ldots,A_k(D)\}\quad(k\ge1).
\]
Write $s_k(D):=|V_k(D)|$.  A level $k$ is \emph{flexible} if
\[
 s_k(D)\ge2\qquad\text{and}\qquad s_{k+1}(D)>s_k(D).
\]
When $D$ is fixed we abbreviate $A_k,V_k,s_k$.
\end{definition}

If $D=D_g$, the threshold law gives $A_k=g^{[k]}(0)$, so the level sizes are
determined by $D$ alone.

\begin{lemma}[depth-preserving level assignments]
\label{lem:level-assignments}
Let $D$ be generator-realizable.  Suppose that for every $k\ge0$ positive
integers $(c_z)_{z\in V_k}$ are chosen so that
\[
 \sum_{z\in V_k}c_z=s_{k+1}.
\]
Define $g(x)=\sum_{z=0}^{x}c_z$.  Then $g$ is a step generator and $D_g=D$.
Conversely, every generator $g$ with $D_g=D$ has positive child counts
$c_0=g(0)$ and $c_z=g(z)-g(z-1)$ for $z>0$, satisfying these identities.
\end{lemma}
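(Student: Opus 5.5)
The plan is to reduce both directions to the orbit characterization of Proposition~\ref{prop:depth-orbit}: $D_g=D$ holds exactly when $g^{[k]}(0)=A_k$ for every $k$. Here $A_k=A_k(D)$, and since $D$ is realized by some generator $h$, the threshold law gives $A_k=h^{[k]}(0)$. So $A_0=0$, $A_{k+1}=A_k+s_{k+1}$, and the levels $V_k$ partition $\Nat$ into consecutive blocks.

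\emph{Forward direction.} First, $g$ is a generator. Since every $c_z\ge1$, $g$ is strictly increasing. For $g(x)\ge x+1$, note that $g(x)$ is a sum of $x+1$ positive integers, so $g(x)\ge x+1$ directly.

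Next, the key identity is that $g$ sends the last vertex of one level to the last vertex of the next: $g(A_k)=A_{k+1}$. Because the levels are consecutive blocks, summing the child counts through level $k$ gives
\[
 g(A_k)=\sum_{j=0}^{k}\sum_{z\in V_j}c_z=\sum_{j=0}^{k}s_{j+1}=A_{k+1}-A_0=A_{k+1}.
\]
Then induction on $k$, starting from $g^{[0]}(0)=0=A_0$, gives $g^{[k]}(0)=A_k$ for all $k$. Proposition~\ref{prop:depth-orbit}, applied to $g$ and the realizing generator $h$, then yields $D_g=D_h=D$.

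\emph{Converse.} Suppose $D_g=D$. The counts $c_z$ are positive because $g(0)\ge1$ and $g$ is strictly increasing. By telescoping, $g(x)=\sum_{z\le x}c_z$. By the threshold law, $A_k=g^{[k]}(0)$, so $g(A_k)=A_{k+1}$ for every $k$. Summing the $c_z$ over $V_k$ then telescopes:
\[
 \sum_{z\in V_k}c_z=g(A_k)-g(A_{k-1})=A_{k+1}-A_k=s_{k+1}\qquad(k\ge1),
\]
and for $k=0$ the sum is $c_0=g(0)=A_1=s_1$.

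The only delicate point is bookkeeping: keeping track of the indexing at level $0$, and using that $V_0=\{0\}$ with $A_0=0$, so that the level blocks tile $\Nat$ exactly. Nothing else is a real obstacle.
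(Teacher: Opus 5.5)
Your proof is correct and follows essentially the same route as the paper. Both directions reduce to the identity $g(A_k)=A_{k+1}$: you obtain it by summing all levels up to $k$ at once, while the paper telescopes level by level. From there, the orbit/threshold characterization of exact depth finishes the forward direction, and the threshold law plus telescoping gives the converse.
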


\begin{proof}
The positive $c_z$ make $g$ strictly increasing and give $g(x)\ge x+1$.
For $k=0$, $g(A_0)=c_0=s_1=A_1$; for $k\ge1$, telescoping gives
\[
 g(A_k)-g(A_{k-1})=s_{k+1}=A_{k+1}-A_k.
\]
Thus $g(A_k)=A_{k+1}$ for all $k$, hence $g^{[k]}(0)=A_k$ and $D_g=D$.
Conversely, if $D_g=D$, then $A_k=g^{[k]}(0)$; the $k=0$ identity is
$c_0=g(0)=A_1=s_1$, while for $k\ge1$ telescoping gives
\[
 \sum_{z\in V_k}c_z=g(A_k)-g(A_{k-1})=s_{k+1}.
\]
Strict increase makes all $c_z$ positive.
\end{proof}

\begin{lemma}[continuum supply at one exact depth]
\label{lem:continuum-supply}
If $D$ has infinitely many flexible levels, there are $2^{\aleph_0}$ distinct
generators $g$ with $D_g=D$.
\end{lemma}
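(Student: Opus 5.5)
We use Lemma~\ref{lem:level-assignments}, which reduces the statement to counting sequences of positive child counts with prescribed level sums. By that lemma, every family of positive integers $(c_z)_{z\in\Nat}$ with $\sum_{z\in V_k}c_z=s_{k+1}$ for every $k$ yields a generator $g(x)=\sum_{z\le x}c_z$ with $D_g=D$. Distinct count sequences give distinct generators, since $c_0=g(0)$ and $c_z=g(z)-g(z-1)$ recover the counts from $g$. It therefore suffices to build $2^{\aleph_0}$ admissible count sequences. The upper bound is automatic: generators are functions $\Nat\to\Nat$, so there are at most $2^{\aleph_0}$ of them.

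For each level $k$ we fix a default choice. Since $s_{k+1}\ge s_k\ge1$ for generator-realizable $D$ (because $g$ maps $V_k$ strictly increasingly into $V_{k+1}$, the orbit points $A_k\mapsto A_{k+1}$ being consistent), put $c_z=1$ for all $z\in V_k$ except the last, and give the last element of $V_k$ the remainder $s_{k+1}-s_k+1\ge1$. On a flexible level $k$ we have $s_k\ge2$ and $s_{k+1}-s_k\ge1$. This allows a second, distinct choice: give the \emph{first} element of $V_k$ the count $s_{k+1}-s_k+1\ge2$ and every other element count $1$. The two assignments have the same level sum, and they differ at the first element of $V_k$, where one gives $1$ and the other gives at least $2$. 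Both use positive counts, which is possible because $s_k\ge2$ ensures the first and last elements are distinct.

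Now let $k_0<k_1<\cdots$ enumerate the flexible levels. For each binary sequence $\varepsilon\in\{0,1\}^{\Nat}$, choose the alternative assignment on level $k_j$ exactly when $\varepsilon_j=1$, and the default everywhere else. Each resulting count family satisfies all level identities, so Lemma~\ref{lem:level-assignments} gives a generator $g_\varepsilon$ with $D_{g_\varepsilon}=D$. If $\varepsilon\ne\varepsilon'$, say they differ at index $j$, then the counts differ at the first element $z$ of $V_{k_j}$. Because the counts are recoverable from the generator, $g_\varepsilon\ne g_{\varepsilon'}$. The map $\varepsilon\mapsto g_\varepsilon$ is therefore injective, which gives at least $2^{\aleph_0}$ generators and, combined with the upper bound, exactly $2^{\aleph_0}$.

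The only delicate point is the inequality $s_{k+1}\ge s_k$ used for the default assignment on non-flexible levels. It follows from the converse half of Lemma~\ref{lem:level-assignments}: some realizing generator has positive counts summing to $s_{k+1}$ over $s_k$ elements, so $s_{k+1}\ge s_k$. Moreover, on a non-flexible level we may simply copy that generator's own counts instead of using the default, which sidesteps the issue altogether.
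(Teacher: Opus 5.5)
Your proof is correct and follows the paper's argument. On each flexible level you have two admissible child-count assignments (you use last-heavy versus first-heavy; the paper uses first-heavy versus second-heavy), you make independent binary choices across infinitely many flexible levels, you get $D_g=D$ from Lemma~\ref{lem:level-assignments}, and injectivity follows because the counts can be recovered from $g$. You also spell out two details the paper leaves implicit: that $s_{k+1}\ge s_k$ on the default levels, and the upper bound $2^{\aleph_0}$.
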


\begin{proof}
At a flexible level $k$, put $P=A_{k-1}+1$, $Q=P+1$, and
$r=s_{k+1}-s_k\ge1$.  Two admissible assignments are
\[
 c_P=r+1,\quad c_z=1\ (z\ne P),
\]
and
\[
 c_P=1,\quad c_Q=r+1,\quad c_z=1\ (z\notin\{P,Q\}).
\]
At every other level use any fixed admissible assignment.  Independent binary
choices on infinitely many flexible levels give $2^{\Nat}$ distinct cumulative
generators, all of depth $D$ by Lemma~\ref{lem:level-assignments}.
\end{proof}

\begin{definition}[algebraic multiplicity of a depth profile]
For an initial basis $B$ and a generator-realizable depth map $D$, define
\[
 \mathfrak M_B(D)
 :=\left|\left\{\Alg{\rho_g}{B}:D_g=D\right\}\right|.
\]
For the Grzegorczyk bases write
$\mathfrak M_m(D):=\mathfrak M_{B_m}(D)$.
\end{definition}

\begin{theorem}[maximal algebraic multiplicity of an exact-depth fibre]
\label{thm:flexible-continuum}
Let $B$ be any countable initial basis containing zero and the projections, and
let $D$ be a generator-realizable depth profile with infinitely many flexible
levels.  Then
\[
 \boxed{\mathfrak M_B(D)=2^{\aleph_0}.}
\]
In particular, for every $m\in\Nat$,
\[
 \boxed{\mathfrak M_m(D)=2^{\aleph_0}}.
\]
\end{theorem}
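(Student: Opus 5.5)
The plan is a counting argument that combines Lemmas~\ref{lem:continuum-supply}, \ref{lem:descent-internal}, \ref{lem:countable} and~\ref{lem:generator-injective}. The upper bound is trivial. Every algebra $\Alg{\rho_g}{B}$ is a countable set of functions $\Nat^{<\omega}\to\Nat$ by Lemma~\ref{lem:countable}. The set of all generators has cardinality $2^{\aleph_0}$, so the family $\{\Alg{\rho_g}{B}:D_g=D\}$ is indexed by at most $2^{\aleph_0}$ generators and $\mathfrak M_B(D)\le 2^{\aleph_0}$. (Alternatively: there are only $2^{\aleph_0}$ countable subsets of the set of all finitary functions, which itself has size $2^{\aleph_0}$.)

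For the lower bound, Lemma~\ref{lem:continuum-supply} gives a family $\mathcal G$ of $2^{\aleph_0}$ distinct generators $g$ with $D_g=D$. By Lemma~\ref{lem:generator-injective}, the descents $\rho_g$ for $g\in\mathcal G$ are pairwise distinct unary functions. By Lemma~\ref{lem:descent-internal}, which only needs zero and projections in $B$, we have $\rho_g\in\Alg{\rho_g}{B}$. Now consider the map $\Phi:g\mapsto\Alg{\rho_g}{B}$ on $\mathcal G$. Each fibre of $\Phi$ is countable. Indeed, if $\Alg{\rho_g}{B}=\mathcal A$, then $\rho_g\in\mathcal A$, so $g$ is determined (injectively) by an element of the countable set $\mathcal A$. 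Hence $\Phi^{-1}(\mathcal A)$ injects into $\mathcal A$ and is countable. A set of size $2^{\aleph_0}$ partitioned into countable fibres has $2^{\aleph_0}$ fibres, because $\kappa\cdot\aleph_0<2^{\aleph_0}$ whenever $\kappa<2^{\aleph_0}$. This step uses only cardinal arithmetic with $\aleph_0$ factors, and it holds in ZFC since $2^{\aleph_0}$ is infinite. Therefore the image of $\Phi$ has cardinality $2^{\aleph_0}$, which gives $\mathfrak M_B(D)\ge2^{\aleph_0}$.

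The ``in particular'' clause follows because each $B_m$ is countable and contains zero and the projections. The main point to get right is the fibre-countability step. Distinct generators may well share an algebra, and Lemma~\ref{lem:descent-internal} is exactly what makes that possible only countably often. I would present this step explicitly rather than attempting a direct diagonal construction of pairwise distinct algebras, which would be the harder route.
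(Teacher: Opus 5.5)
Your proposal is correct and follows essentially the same route as the paper: continuum supply, injectivity of $g\mapsto\rho_g$, internality of the descent, and countability of each algebra make every fibre of $g\mapsto\Alg{\rho_g}{B}$ countable, and cardinal arithmetic then forces $2^{\aleph_0}$ fibres. Your explicit treatment of the upper bound and of the cardinal-arithmetic step simply spells out what the paper calls immediate.
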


\begin{proof}
Lemma~\ref{lem:continuum-supply} gives continuum many distinct generators.
Partition them by equality of the algebras $\Alg{\rho_g}{B}$.  By
Lemma~\ref{lem:descent-internal}, every associated descent in one part is a
unary function in that one algebra.  The algebra is countable by
Lemma~\ref{lem:countable}, and Lemma~\ref{lem:generator-injective} shows that
distinct generators have distinct descents.  Hence each part is countable.
A continuum-sized set cannot be partitioned into fewer than continuum many
countable parts.  The reverse inequality is immediate.
\end{proof}

\begin{corollary}[binary exact-depth multiplicity]
\label{cor:binary-continuum}
Let $D_{\rm bin}(y)=\lceil\log_2(y+1)\rceil$.  Then for every $m$,
\[
 \mathfrak M_m(D_{\rm bin})=2^{\aleph_0}.
\]
\end{corollary}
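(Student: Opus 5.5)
The corollary will follow directly from Theorem~\ref{thm:flexible-continuum}. It suffices to check two things: that $D_{\rm bin}$ is generator-realizable, and that it has infinitely many flexible levels. The basis $B_m$ is countable and contains zero and the projections, so the theorem will then give $\mathfrak M_m(D_{\rm bin})=2^{\aleph_0}$ for every $m$.

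For realizability, I will invoke Lemma~\ref{lem:plateau-depth}, which gives $D_b=D_{\rm bin}$ for the generator $b(x)=2x+1$. For the level structure, the threshold law~\eqref{eq:threshold}, or equally Proposition~\ref{prop:depth-orbit}, gives $A_k=b^{[k]}(0)=2^k-1$. Hence $V_k=\{2^{k-1},\ldots,2^k-1\}$ for $k\ge1$, so $s_k=2^{k-1}$.

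Next comes the flexibility check. For every $k\ge2$ we have $s_k=2^{k-1}\ge2$ and $s_{k+1}=2^k>2^{k-1}=s_k$, so every level $k\ge2$ is flexible. In particular there are infinitely many flexible levels, and Theorem~\ref{thm:flexible-continuum} applies.

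There is no real obstacle here; the only care needed is the index bookkeeping at small levels. Level $1$ has $s_1=1$ and is therefore not flexible, but this does not matter because only infinitely many flexible levels are required.
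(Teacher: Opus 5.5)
Your proposal is correct and follows essentially the same route as the paper: compute $s_k=2^{k-1}$ for $k\ge1$, observe that every level $k\ge2$ is flexible, and apply Theorem~\ref{thm:flexible-continuum}. Your explicit check of realizability via $b(x)=2x+1$ and of the basis hypotheses on $B_m$ just spells out details the paper leaves implicit.
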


\begin{proof}
For binary depth, $s_0=1$ and $s_k=2^{k-1}$ for $k\ge1$, so every $k\ge2$
is flexible.  Apply Theorem~\ref{thm:flexible-continuum}.
\end{proof}

\section{Effective algebraic multiplicity}
\label{sec:effective-multiplicity}

Call an initial basis $B$ \emph{effective} if it is given by an effective
enumeration $(b_e)_{e\in\Nat}$ together with a computable arity map
$e\mapsto\operatorname{arity}(b_e)$ and a total computable uniform evaluator
for the values $b_e(\bar x)$ on tuples of the declared arity.  Assume also that
$B$ contains zero and the projections.  For a computable generator-realizable
depth profile $D$, define
\[
 \mathfrak M_B^{\rm eff}(D)
 :=\left|\left\{\Alg{\rho_g}{B}:
 g\text{ computable and }D_g=D\right\}\right|.
\]
For $B=B_m$ write $\mathfrak M_m^{\rm eff}(D)$.

If $D$ is computable, then $A_k$ and $s_k$ are computable.  Indeed, a
generator-realizable depth profile is nondecreasing and unbounded, and
$A_k+1$ is the least $y$ with $D(y)>k$; a direct search therefore terminates.
Then $s_0=1$ and $s_k=A_k-A_{k-1}$ for $k\ge1$.  Thus flexibility is a
decidable property of the level index, and the flexible levels have a
computable increasing enumeration whenever infinitely many occur.

\begin{lemma}[effective avoidance inside one flexible exact-depth fibre]
\label{lem:effective-avoidance}
Let $D$ be a computable generator-realizable depth profile with infinitely many
flexible levels, and let $F_0,F_1,\ldots$ be a uniformly computable sequence of
total unary functions.  There is a computable generator $g$ such that
\[
 D_g=D
\]
and
\[
 \rho_g\ne F_i
 \qquad\text{for every }i\in\Nat.
\]
\end{lemma}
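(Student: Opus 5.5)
We will prove the statement by a diagonalization in which the $i$-th flexible level is reserved for defeating $F_i$. Let $k_0<k_1<\cdots$ be the computable increasing enumeration of flexible levels, which exists by the remarks preceding the lemma. At every non-reserved level we use a fixed computable admissible assignment; the simplest is $c_P=s_{k+1}-s_k+1$ for the first vertex $P=A_{k-1}+1$ of $V_k$ and $c_z=1$ otherwise. At level $0$ we use $c_0=s_1$.

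At the reserved level $k=k_i$ we again put $P=A_{k-1}+1$, $Q=P+1$ and $r=s_{k+1}-s_k\ge1$, and choose between the two assignments of Lemma~\ref{lem:continuum-supply}. These are
\[
 \text{(I)}\ c_P=r+1,\ c_z=1\ (z\ne P),
 \qquad
 \text{(II)}\ c_P=1,\ c_Q=r+1,\ c_z=1\ (z\notin\{P,Q\}).
\]
The key observation is that the resulting descents differ at a single computable witness. Both assignments agree on the cumulative sums $g(z)$ for $z<A_{k-1}+1$, since those are fixed by earlier levels and equal $A_k$ at $z=A_{k-1}$. The first child of $P$ is $y^*=A_k+1$. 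Its successor is $y^*+1=A_k+2$, and this point lies in level $k+1$ because $s_{k+1}>s_k\ge2$, so $s_{k+1}\ge3$. Under (I) we get $\rho_g(y^*+1)=P$, since $g(P)=A_k+r+1\ge A_k+2$. Under (II) we get $g(P)=A_k+1$, so $\rho_g(y^*+1)=Q$. The witness $w_i:=A_k+2$ is therefore computable from $i$. Moreover, $\rho_g(w_i)$ depends only on the choice made at level $k_i$: it is $\min\{z:g(z)\ge w_i\}$, and $g$ restricted to levels up to $k$ is determined by the assignments at levels $\le k$, where only level $k$'s choice matters for the relevant values.

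The construction then sets the choice at level $k_i$ as follows. Evaluate $F_i(w_i)$, which terminates because the sequence is uniformly computable and total. If $F_i(w_i)=P$, choose (II); otherwise choose (I). In either case $\rho_g(w_i)\ne F_i(w_i)$, so $\rho_g\ne F_i$. Distinct $i$ use distinct levels $k_i$, so the choices do not interfere.

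It remains to check that $g$ is a computable generator with $D_g=D$. Lemma~\ref{lem:level-assignments} gives that $g$ is a generator with $D_g=D$, since every level receives positive counts summing to $s_{k+1}$. For computability, given $x$ we compute its level using $D$ and the $A_k$, then compute the finitely many assignments at levels up to that one. Each reserved level requires only one evaluation $F_i(w_i)$, and $g(x)=\sum_{z\le x}c_z$.

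The main obstacle is the bookkeeping claim that $\rho_g(w_i)$ is determined by the level-$k_i$ choice alone. To verify it, we will use the identity $g(A_{k-1})=A_k$ from the proof of Lemma~\ref{lem:level-assignments}, which shows that the earlier levels contribute exactly $A_k$ regardless of what was chosen there. The value $\rho_g(w_i)$ then depends only on $c_P$, and on $c_Q$ in case (II). Here $Q\in V_k$ because $s_k\ge2$.
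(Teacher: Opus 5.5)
Your proposal is correct and follows the paper's proof essentially step for step. It uses the same reservation of the $i$th flexible level for $F_i$, the same witness $A_{k_i}+2$, and the same choice between the left-heavy assignment and the one shifting the extra children to $Q$, according to whether $F_i$ outputs $P$ there, together with the same appeal to Lemma~\ref{lem:level-assignments} for $D_g=D$. Your explicit check via $g(A_{k-1})=A_k$ that $\rho_g(w_i)\in\{P,Q\}$ depends only on the level-$k_i$ choice is a detail the paper leaves implicit.
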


\begin{proof}
Let $k_0<k_1<\cdots$ be the computable enumeration of the flexible levels.
We specify the positive child counts $c_z$ level by level.  At the root set
$c_0=s_1$.  On every nonroot level $k$ not among the $k_i$, use the canonical
left-heavy composition
\[
 c_P=s_{k+1}-s_k+1,
 \qquad
 c_z=1\quad(z\in V_k\setminus\{P\}),
\]
where $P=A_{k-1}+1$ is the first parent of $V_k$.

At the designated level $k_i$, put
\[
 P_i=A_{k_i-1}+1,
 \qquad
 Q_i=P_i+1,
 \qquad
 y_i=A_{k_i}+2,
\]
and let
\[
 r_i=s_{k_i+1}-s_{k_i}\ge1.
\]
Flexibility gives $s_{k_i}\ge2$, so $P_i,Q_i$ are distinct parents and the
second child $y_i$ exists.  If $F_i(y_i)\ne P_i$, choose
\[
 c_{P_i}=r_i+1,
 \qquad
 c_z=1\quad(z\in V_{k_i}\setminus\{P_i\}).
\]
Then the first parent has at least two children and
$\rho_g(y_i)=P_i\ne F_i(y_i)$.  If instead $F_i(y_i)=P_i$, choose
\[
 c_{P_i}=1,
 \qquad
 c_{Q_i}=r_i+1,
 \qquad
 c_z=1\quad(z\in V_{k_i}\setminus\{P_i,Q_i\}).
\]
Now the first child belongs to $P_i$ and the second to $Q_i$, so
\[
 \rho_g(y_i)=Q_i\ne P_i=F_i(y_i).
\]
In either case the child counts are positive and sum to $s_{k_i+1}$.

Define
\[
 g(x)=\sum_{z=0}^{x}c_z.
\]
Lemma~\ref{lem:level-assignments} gives $D_g=D$.  The construction is
computable: from a parent index $z$ one computes its depth level, determines
whether that level is some $k_i$, evaluates at most one value $F_i(y_i)$, and
thereby computes $c_z$; finite summation then computes $g(x)$.  By construction
$\rho_g(y_i)\ne F_i(y_i)$ for every $i$.
\end{proof}

For effective diagonalization, a \emph{raw step term} is a well-typed finite
term over $B$, composition, and step recursion along $\rho$, with the semantic
bound-admissibility test omitted.  Because the bound is not used in the
recurrence, raw terms remain total and form an effectively enumerable superset
of the genuine bounded step algebra.

\begin{lemma}[effective raw-term enumeration]
\label{lem:raw-enumeration}
Fix an effective basis $B$ and a computable descent $\rho$.  The raw unary
step terms admit an effective enumeration
\[
 T_0,T_1,T_2,\ldots
\]
together with a total computable universal evaluator $U(e,x)=T_e(x)$.  Every
unary function in $\Alg{\rho}{B}$ occurs among these denotations.
\end{lemma}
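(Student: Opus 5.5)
The plan is to define raw terms as a context-free syntax with arity annotations, Gödel-number them, and define the evaluator by recursion on term structure. A raw term of arity $n$ is generated by four clauses: a basis symbol $b_e$ of declared arity $\operatorname{arity}(b_e)$; a composition $\mathrm{Comp}(h;t_1,\ldots,t_k)$ where $h$ has arity $k$ and every $t_j$ has arity $n$; a descent symbol for $\rho$ of arity $1$; and a step-recursion node $\mathrm{SR}(a,s)$ of arity $n+1$, where $a$ has arity $n$ and $s$ has arity $n+2$. The bound term may be carried along syntactically or omitted, since it is never consulted. Arities of basis symbols are computable, so well-typedness of a coded finite tree is decidable by a primitive recursive check on the code. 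Listing all codes in increasing order and keeping the well-typed unary ones gives the effective enumeration $T_0,T_1,\ldots$. Should this list turn out finite (it cannot, since composition builds infinitely many unary terms), I would pad it with repetitions.

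Next I define the universal evaluator $\operatorname{Ev}(t,\bar x)$ by recursion on the structure of $t$. Basis symbols are evaluated by the given uniform evaluator, and $\rho$ by its computing program. Composition evaluates the argument terms and then the head. For a step-recursion node on input $(\bar x,y)$, first compute the finite descent schedule $y,\rho(y),\rho^{[2]}(y),\ldots$. It reaches $0$ after at most $y$ steps because $\rho(y)<y$ for $y>0$. Then evaluate $a(\bar x)$ and apply $s$ stage by stage back up the schedule. Each call is either on a proper subterm or, inside the schedule loop, on a proper subterm with a finite number of repetitions. So well-founded induction on term size shows $\operatorname{Ev}$ is total, and it is computable by the usual course-of-values coding, or more simply by the recursion theorem with a termination proof. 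Put $U(e,x)=\operatorname{Ev}(T_e,x)$.

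Finally, I prove by induction on the finite closure stage that every function of $\Alg{\rho}{B}$ of arity $n$ is the denotation of some raw term of arity $n$. The basis case holds by the basis symbols. Composition is handled by the composition clause. For bounded step recursion, I take the raw $\mathrm{SR}$ node built from terms for $a$ and $s$: its raw semantics coincides with the defined $f$, because the bound only restricts admissibility and does not change values. Since all functions here have positive arity, the unary members arise as denotations of unary terms.

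The main obstacle is arity bookkeeping. I must make sure that composition with mismatched arities is excluded syntactically and that projections adapt arities where needed, so that every unary algebra member is a unary term rather than merely a restriction of a higher-arity one. Showing totality of the evaluator is routine by comparison, given that $\rho$ is a computable descent.
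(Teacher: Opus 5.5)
Your proposal is correct and follows essentially the same route as the paper. Like the paper, you enumerate the well-typed finite syntax trees, evaluate them by structural recursion (step-recursion nodes terminate because $\rho(y)<y$), and note that dropping the bound certificate turns every admissible derivation into a raw term with the same denotation. Your one deviation, adding $\rho$ as a primitive symbol, is harmless: $\rho$ is computable, and it is already a raw-term denotation by Lemma~\ref{lem:descent-internal}, so the set of denotations is unchanged.
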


\begin{proof}
The finite well-typed syntax trees over the effective basis $B$, composition,
and one step-recursion constructor are effectively enumerable.  Their
semantics are computed by structural recursion.  At a step-recursion node,
evaluation terminates because each recursive call replaces $y>0$ by
$\rho(y)<y$.  Thus evaluation of the $e$th raw term on input $x$ defines a
total computable universal function $U(e,x)$.  Every admissible bounded
step-recursion derivation is, after forgetting the truth of its bound
certificate, one of these raw terms.
\end{proof}

\begin{theorem}[exact effective multiplicity under persistent flexibility]
\label{thm:effective-multiplicity}
Let $B$ be an effective initial basis, and let $D$ be a computable
generator-realizable depth profile with infinitely many flexible levels.  Then
\[
 \boxed{\mathfrak M_B^{\rm eff}(D)=\aleph_0.}
\]
In particular, for every fixed $m\in\Nat$,
\[
 \boxed{\mathfrak M_m^{\rm eff}(D)=\aleph_0.}
\]
\end{theorem}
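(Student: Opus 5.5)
The plan is to prove the two inequalities $\mathfrak M_B^{\rm eff}(D)\le\aleph_0$ and $\mathfrak M_B^{\rm eff}(D)\ge\aleph_0$ separately. The upper bound is a counting argument: there are only countably many computable functions, hence only countably many computable generators $g$. So the family $\{\Alg{\rho_g}{B}: g \text{ computable},\ D_g=D\}$ is the image of a countable set and is at most countable. No structure of the algebras is needed for this direction.

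For the lower bound, I will build computable generators $g_0,g_1,g_2,\ldots$, all with $D_{g_n}=D$, such that $\rho_{g_n}\notin\Alg{\rho_{g_j}}{B}$ for every $j<n$. This also gives the promised non-containment property: no later algebra is contained in an earlier one, since by Lemma~\ref{lem:descent-internal} $\rho_{g_n}\in\Alg{\rho_{g_n}}{B}$. In particular the algebras are pairwise distinct, so there are infinitely many of them.

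The construction is recursive in $n$. Assume $g_0,\ldots,g_{n-1}$ are already built as computable generators with indices available. Each $\rho_{g_j}$ is then a computable descent, because $\rho_g(y)=\min\{z:g(z)\ge y\}$ is a bounded search with $z<y$. Lemma~\ref{lem:raw-enumeration} then gives, for each $j<n$, an enumeration $T^{(j)}_0,T^{(j)}_1,\ldots$ of raw unary step terms along $\rho_{g_j}$, together with a total computable evaluator. Every unary member of $\Alg{\rho_{g_j}}{B}$ appears among these denotations. Interleaving the $n$ lists gives a uniformly computable sequence of total unary functions $F_0,F_1,\ldots$, namely $F_{ni+j}=T^{(j)}_i$. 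Lemma~\ref{lem:effective-avoidance} then yields a computable $g_n$ with $D_{g_n}=D$ and $\rho_{g_n}\ne F_i$ for all $i$. Hence $\rho_{g_n}$ lies in none of the earlier algebras.

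The main obstacle is uniformity. The whole sequence $(g_n)$ does not need to be computable, since each $g_n$ individually is computable, and that already suffices for $\mathfrak M^{\rm eff}_B(D)\ge\aleph_0$. However, the step from $g_{n-1}$ to $g_n$ needs the earlier descents to be uniformly computable. The proof of Lemma~\ref{lem:effective-avoidance} is explicit, so an index for $g_n$ is computable from an index for the sequence $(F_i)$. For this reason I will state the recursion as a computable map on indices. The only checks left are that raw terms over a computable descent have a total evaluator, which is Lemma~\ref{lem:raw-enumeration}, and that the raw family contains the genuine bounded algebra, so that avoiding all raw terms avoids the algebra. Combining the two bounds gives exactly $\aleph_0$. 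The Grzegorczyk case follows because each $B_m$ is an effective basis containing zero and the projections.
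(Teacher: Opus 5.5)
Your proposal is correct and matches the paper's proof: countability of computable generators gives the upper bound, and the lower bound comes from the same inductive construction, which interleaves the raw-term enumerations of the finitely many earlier algebras, applies Lemma~\ref{lem:effective-avoidance}, and concludes with Lemma~\ref{lem:descent-internal}. Your uniformity concern is harmless but unnecessary, because each stage involves only finitely many earlier computable generators, so fixing their indices already makes the interleaved list computable; and for the base case, $g_0$ can simply be the canonical left-heavy choice.
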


\begin{proof}
The upper bound is immediate because there are only countably many computable
generators.

For the lower bound, first choose a computable generator $g_0$ realizing $D$,
for instance the canonical left-heavy choice from
Lemma~\ref{lem:level-assignments}.  Suppose computable generators
$g_0,\ldots,g_{r-1}$ of depth $D$ have already been constructed.  Their
generalized inverses are computable by monotone search.  By
Lemma~\ref{lem:raw-enumeration}, for each $j<r$ the unary functions in
$\Alg{\rho_{g_j}}{B}$ occur in an effectively evaluable sequence of total
computable functions.  Interleave the finitely many sequences to obtain one
uniformly computable list
\[
 F_0,F_1,\ldots
\]
containing every unary function in
\[
 \bigcup_{j<r}\Alg{\rho_{g_j}}{B}.
\]
Apply Lemma~\ref{lem:effective-avoidance} to obtain a computable generator
$g_r$ with $D_{g_r}=D$ and
\[
 \rho_{g_r}\notin
 \bigcup_{j<r}\Alg{\rho_{g_j}}{B}.
\]
By Lemma~\ref{lem:descent-internal}, however,
\[
 \rho_{g_r}\in\Alg{\rho_{g_r}}{B}.
\]
Thus the new algebra differs from every earlier one.  Induction produces
infinitely many pairwise distinct computably generated algebras, proving the
lower bound $\aleph_0$.
\end{proof}

\begin{corollary}[one-way noncontainment in one exact-depth fiber]
\label{cor:one-way-noncontainment}
Under the assumptions of Theorem~\ref{thm:effective-multiplicity}, there are
computable generators $g_0,g_1,\ldots$ with $D_{g_r}=D$ such that
\[
 \boxed{
 \Alg{\rho_{g_r}}{B}\nsubseteq
 \Alg{\rho_{g_j}}{B}
 \qquad(j<r).}
\]
In particular this holds for each fixed Grzegorczyk basis $B_m$.
\end{corollary}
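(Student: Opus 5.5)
The plan is to reuse the inductive construction in the proof of Theorem~\ref{thm:effective-multiplicity} unchanged, and read off the stronger conclusion it already delivers. Start from a computable generator $g_0$ with $D_{g_0}=D$, for instance the canonical left-heavy assignment of Lemma~\ref{lem:level-assignments}. Suppose computable $g_0,\ldots,g_{r-1}$ of depth $D$ have been built. Each $\rho_{g_j}$ is computable by monotone search. Lemma~\ref{lem:raw-enumeration} then gives, for each $j<r$, a uniformly computable list of total unary raw terms whose denotations include every unary function of $\Alg{\rho_{g_j}}{B}$. Interleave these finitely many lists into a single uniformly computable sequence $F_0,F_1,\ldots$. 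Lemma~\ref{lem:effective-avoidance} then yields a computable $g_r$ with $D_{g_r}=D$ and $\rho_{g_r}\ne F_i$ for all $i$.

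The key observation is that this avoidance is one-directional and already gives noncontainment, not merely inequality. The list $(F_i)$ contains every unary member of each $\Alg{\rho_{g_j}}{B}$ with $j<r$, so $\rho_{g_r}\notin\Alg{\rho_{g_j}}{B}$ for each $j<r$. Lemma~\ref{lem:descent-internal}, which needs only zero and projections in $B$, gives $\rho_{g_r}\in\Alg{\rho_{g_r}}{B}$. Hence $\rho_{g_r}$ witnesses $\Alg{\rho_{g_r}}{B}\nsubseteq\Alg{\rho_{g_j}}{B}$ for every $j<r$. Induction on $r$ produces the whole sequence.

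For the final sentence, I would check that each Grzegorczyk basis $B_m$ is effective in the stated sense. It is finitely many explicit total computable functions plus the projections, so it has an effective enumeration, a computable arity map and a uniform evaluator. It also contains zero and the projections, so the argument applies with $B=B_m$.

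The only point needing care is the uniformity in the interleaving step. The sequence $F_0,F_1,\ldots$ must be uniformly computable as a whole, not merely a list of individually computable functions. To secure this, I would keep the finite list of algorithms for $g_0,\ldots,g_{r-1}$ as part of the inductive data. From them one computes $\rho_{g_j}$, and hence the universal evaluators $U_j(e,x)$ of Lemma~\ref{lem:raw-enumeration}. Then I would set $F_{rn+j}=U_j(n,\cdot)$. Totality of raw terms, guaranteed because the bound certificate is ignored, makes every $F_i$ total, as Lemma~\ref{lem:effective-avoidance} requires.
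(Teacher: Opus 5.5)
Your proposal is correct and is the paper's own argument: the corollary is read off from the inductive construction in the proof of Theorem~\ref{thm:effective-multiplicity}, where $\rho_{g_r}\in\Alg{\rho_{g_r}}{B}$ by Lemma~\ref{lem:descent-internal} but $\rho_{g_r}$ lies in none of the earlier algebras $\Alg{\rho_{g_j}}{B}$, $j<r$. Your extra checks, namely that $B_m$ is effective and that the finite interleaving is uniformly computable, spell out details the paper leaves implicit.
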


\begin{proof}
This is immediate from the construction in
Theorem~\ref{thm:effective-multiplicity}: $\rho_{g_r}$ belongs to the later
algebra and to none of the earlier ones.
\end{proof}

\begin{corollary}[binary effective multiplicity]
\label{cor:binary-effective}
For binary depth and every fixed $m$,
\[
 \boxed{\mathfrak M_m^{\rm eff}(D_{\rm bin})=\aleph_0.}
\]
\end{corollary}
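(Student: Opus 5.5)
The plan is to specialize Theorem~\ref{thm:effective-multiplicity} to the binary profile $D_{\rm bin}(y)=\lceil\log_2(y+1)\rceil$ and the effective bases $B_m$. Two hypotheses have to be checked: $D_{\rm bin}$ must be computable, generator-realizable and have infinitely many flexible levels, and each $B_m$ must be an effective basis containing zero and the projections.

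For the first point, $D_{\rm bin}=D_b$ with $b(x)=2x+1$ by Lemma~\ref{lem:plateau-depth}, so $D_{\rm bin}$ is generator-realizable. It is clearly computable, for instance by comparing $y+1$ with successive powers of two. The level-size computation from the proof of Corollary~\ref{cor:binary-continuum} gives $s_0=1$ and $s_k=2^{k-1}$ for $k\ge1$. Hence every level $k\ge2$ satisfies $s_k\ge2$ and $s_{k+1}=2s_k>s_k$, so infinitely many levels are flexible.

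For the second point, I will note that $B_m$ consists of zero, successor, all projections, and finitely many Grzegorczyk generators. Each of these is total and computable, with fixed arities. The projections $P^n_i$ can be enumerated effectively by pairs $(n,i)$, and a uniform evaluator simply dispatches on the index. So $B_m$ is effective in the sense of Section~\ref{sec:effective-multiplicity}, and it contains zero and the projections by definition.

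With both hypotheses in place, Theorem~\ref{thm:effective-multiplicity} yields $\mathfrak M_m^{\rm eff}(D_{\rm bin})=\aleph_0$ for each fixed $m$. The only step needing any care is the effectivity of $B_m$. Since the standard Grzegorczyk generators are primitive recursive, a single evaluator that runs their primitive recursive definitions suffices, so I do not expect a real obstacle.
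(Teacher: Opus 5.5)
Your proposal is correct and matches the intended argument: the paper gives no separate proof of this corollary, treating it as the direct specialization of Theorem~\ref{thm:effective-multiplicity} to $D_{\rm bin}$, whose flexible levels ($k\ge2$) were already identified in the proof of Corollary~\ref{cor:binary-continuum}. Your explicit checks that $D_{\rm bin}=D_b$ is computable and generator-realizable, and that each $B_m$ is an effective basis, spell out hypotheses the paper leaves implicit in the ``in particular'' clause of that theorem.
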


\begin{remark}
The effective construction guarantees computability, not a low complexity
bound; the explicit plateau generator supplies a separate low-complexity
witness.
\end{remark}

\section{Discussion and conclusion}

Exact depth fixes the level of every input and, by
Proposition~\ref{prop:depth-orbit}, every orbit endpoint, but it does not record
which parent in the preceding level is chosen.  The explicit plateau example
shows that this missing geometry survives bounded step-recursion closure, while
Theorem~\ref{thm:flexible-continuum} shows that the ambiguity can be maximal:
one exact-depth fibre supports continuum many distinct algebras over every
countable basis containing zero and the projections.  The effective theorem
shows that this is not merely a cardinality effect.

Thus no invariant factoring only through stage counts or orbit endpoints can
classify step-recursion expressiveness.  A natural next problem, for fixed
basis $B_m$ and computable $D$, is to classify equality and inclusion inside
\[
 \{\Alg{\rho_g}{B_m}:g\text{ computable and }D_g=D\}.
\]
The target is a basis-dependent invariant finer than $D_g$ but coarser than the
full predecessor map, for example one built from child-count distributions or
other finite summaries of within-level parent geometry.

The main point is therefore not that depth estimates require finer asymptotics,
but that exact depth itself quotients away algebraically visible information.
At basis zero the source--offset theorem identifies one such obstruction:
fixed-stride halving cannot reproduce a macroscopic concentration of children
inside a level.  Exact depth should consequently be treated as one coordinate
of recursion power, not as a complete semantic invariant.

\section*{Acknowledgements}
The author acknowledges the use of large language models for copyediting,
grammatical correction, and language polishing.  The author reviewed the
resulting text and takes full responsibility for the final manuscript.

\end{document}